\newcommand*{\cl}[1]{{\mathcal{#1}}}
\newcommand*{\bb}[1]{{\mathbb{#1}}}
\newcommand{\ket}[1]{\left|#1\right>}
\newcommand{\bra}[1]{\left<#1\right|}
\newcommand{\proj}[2]{| #1 \rangle\!\langle #2 |}
\newcommand*{\tn}[1]{{\textnormal{#1}}}
\newcommand*{\1}{{\mathbb{1}}}
\newcommand{\T}{\mbox{$\textnormal{Tr}$}}
\theoremstyle{plain}
\newtheorem{theorem}{Theorem}
\newtheorem{lemma}[theorem]{Lemma}
\newtheorem{corollary}[theorem]{Corollary}
\theoremstyle{definition}
\date{\today}
\begin{document}

\title{Generalization of port-based teleportation and controlled teleportation capability}

\author{Kabgyun Jeong}
\email{kgjeong6@snu.ac.kr}
\affiliation{Research Institute of Mathematics, Seoul National University, Seoul 08826, Korea}
\affiliation{School of Computational Sciences, Korea Institute for Advanced Study, Seoul 02455, Korea}
\author{Jaewan Kim}
\email{jaewan@kias.re.kr}
\affiliation{School of Computational Sciences, Korea Institute for Advanced Study, Seoul 02455, Korea}
\author{Soojoon Lee}
\email{level@khu.ac.kr}
\affiliation{Department of Mathematics and Research Institute for Basic Sciences, Kyung Hee University, Seoul 02447, Korea}
\affiliation{School of Computational Sciences, Korea Institute for Advanced Study, Seoul 02455, Korea}

\date{\today}
\pacs{
03.67.Mn, % Entanglement measures, witnesses, and other characterizations
03.65.Ud, % Entanglement and quantum non-locality
03.67.Hk, % Quantum communication
03.67.Ac  % Quantum algorithms, protocols, and simulations
}

\begin{abstract}
As a variant of the original quantum teleportation, port-based teleportation has been proposed, 
and its various kinds of useful applications in quantum information processing have been explored. 
Two users in the port-based teleportation initially share an arbitrary pure state, 
which can be represented by applying one user's local operation to a bipartite maximally entangled state.
If the maximally entangled state is a $2M$-qudit state, 
then it can be expressed as $M$ copies of a two-qudit maximally entangled state, 
where $M$ is the number of ports. 
We here consider a generalization of the original port-based teleportation 
obtained from employing copies of an arbitrary bipartite (mixed) resource 
instead of copies of a pure maximally entangled one.
By means of the generalization, 
we construct a concept of controlled port-based teleportation by combining controlled teleportation with port-based teleportation,
and analyze its performance in terms of several meaningful quantities 
such as the teleportation fidelity, the entanglement fidelity, and the fully entangled fraction. 
In addition, we present quantities called the control power 
and the minimal control power for the controlled version on a given tripartite quantum state.
\end{abstract}

%\pacs{03.67.Hk, 05.40.Ca, 42.50.--p, 89.70.--a}
\maketitle

\section{Introduction}
Quantum teleportation, proposed by Bennett \emph{et al}.~\cite{BBC+93}, 
is a fundamental and innovative way to transmit an unknown quantum information from a sender to a remote receiver 
by exploiting a prior distributed entanglement~\cite{NC00,W13}. 
According to its potential applications, 
this scheme has been studied by using various methods in experimental regimes 
as well as in theoretical ways~\cite{BPM+97,FSB+98,KKS01,BBM+98,R+17}.

As a variant of quantum teleportation, 
port-based teleportation (PBT) has been suggested~\cite{IH08,IH09,WB16,SSMH17,CLM+18,MSSH18,PBP19}. 
While the standard teleportation requires a receiver's recovering operation at the end of the protocol, 
the PBT scheme does not need such a correction operation, 
and only requires the receiver to choose a \emph{port} depending on the classical information 
related to the sender's measurement outcome. 
It has been shown from the features of the PBT scheme 
that PBT can have several kinds of applications in quantum information processing 
such as a universal programmable quantum processor~\cite{IH08}, instantaneous non-local quantum computation~\cite{BK11}, 
quantum-channel discrimination~\cite{PLLP19}, and quantum telecloning protocols~\cite{MJPV99,PB11}. 

In the PBT protocol, 
two users initially prepare an arbitrary $2M$-qudit pure state shared between them.
The state can be decomposed as an arbitrary operation on one user's system and a pure maximally entangled state, 
which can be represented as $M$ copies of a two-qudit maximally entangled state. 
In this paper, we first employ copies of an arbitrary bipartite (mixed) state as the generalized resource 
instead of copies of the maximally entangled state in the PBT protocol,  
and analyze the generalized PBT.

We can also find another kind of the teleportation scheme over any tripartite pure states, 
called controlled teleportation (CT), 
which is a modification of the splitting and reconstruction of the pre-shared quantum information~\cite{KB98,LJK05,LJK07}. 
In the CT scheme, a controller's assistance through a local measurement 
can improve the teleportation fidelity between the sender and the receiver, 
and hence it is natural to take into account the controller's power in the teleportation procedure. 
Recently, the control power has been rigorously investigated in the perfect CT scheme, including higher dimensional cases~\cite{LG14,LG15} 
as well as the general CT ones~\cite{JKL16}.

By combining the two different kinds of teleportation above, 
we here suggest a concept called controlled port-based teleportation (CPBT), 
and analyze its performance by manipulating our generalized PBT. 

This paper is organized as follows. 
In Sec.~\ref{pre}, we describe the definitions of several meaningful quantities which we deal with here,
such as the teleportation fidelity, 
the entanglement fidelity, and the fully entangled fraction, 
illustrating the relations between the teleportation fidelity and the other two fidelities. 
Furthermore, we briefly introduce the main idea of the PBT (see Sec.~\ref{PBT}). 
In Sec.~\ref{main}, we provide a generalization of the original PBT, 
and derive the results related to the performance of our generalized PBT.
In Sec.~\ref{CPBT}, we provide a concept of 
the quantities for the teleportation's capabilities, 
and analyze its properties. 
Moreover, we study the controller's control power (also minimal control power) for the CPBT on a given tripartite quantum state. 
Finally, discussions and remarks are offered in Sec.~\ref{conclusion}, 
and some open questions are raised for future work.

%%%%%%%
\section{Preliminaries} \label{pre}
\subsection{Teleportation fidelity, entanglement fidelity, and fully entangled fraction} \label{definitions}
We briefly review the mathematical definitions and the relation 
between the teleportation fidelity, the entanglement fidelity, and the fully entangled fraction~\cite{HHH99,BHHH00}. 
First, let $\Lambda_{\varrho}$ be the standard teleportation channel over a bipartite quantum state $\varrho$; 
then the teleportation fidelity is naturally given by 
\begin{equation*}
\bb{F}_{\tn{T}}(\Lambda_\varrho)=\int\bra{\psi}\Lambda_\varrho(\psi)\ket{\psi}\tn{d}\psi,
\end{equation*}
where the integral is performed with respect to the uniform distribution $\tn{d}\psi$ 
over all $d$-dimensional pure states 
$\psi:=\proj{\psi}{\psi}$.
The entanglement fidelity is defined as 
\begin{equation} \label{eq:entfid}
F(\Lambda_{\varrho})=\T\Phi^+\left[(\Lambda_\varrho\otimes\1)\Phi^+\right],
\end{equation}
where $\Phi^+=\frac{1}{d}\sum_{i,j=0}^{d-1}\proj{ii}{jj}$ is a maximally entangled state with Schmidt rank $d$, 
and the fully entangled fraction of $\varrho$ is defined by
\begin{equation}\label{eq:fef}
f(\varrho)=\max_{\ket{e}}\bra{e}\varrho\ket{e},
\end{equation}
where the maximum is taken over all maximally entangled states $\ket{e}$ with Schmidt rank $d$. 
It is known that $F(\Lambda_\varrho)= f(\varrho)$ if $\Phi^+$ is equal to a pure state $\ket{e}\bra{e}$
which attains the maximum in Eq.~(\ref{eq:fef}), that is, $f(\varrho)=\bra{e}\varrho\ket{e}=\T \Phi^+\varrho=F(\Lambda_\varrho)$~\cite{HHH99}. 
Without loss of generality, we may assume that 
the entanglement fidelity $F(\Lambda_{\varrho})$ is equivalent to the fully entangled fraction $f(\varrho)$
by taking $\Phi^+$ in Eq.~(\ref{eq:entfid}) as $\ket{e}\bra{e}$ satisfying $f(\varrho)=\bra{e}\varrho\ket{e}$.

We now observe some useful relations between the fidelities. 
The teleportation fidelity and the entanglement fidelity over $\Lambda_\varrho$ take a universal relation~\cite{HHH99,BHHH00} in the form of
\begin{equation} \label{eq:rel}
\bb{F}_{\tn{T}}(\Lambda_\varrho)=\frac{dF(\Lambda_\varrho)+1}{d+1}.
\end{equation}
We remark that $\bb{F}_{\tn{T}}(\Lambda_\varrho)>\frac{2}{d+1}$ [or $F(\Lambda_\varrho)>\frac{1}{d}$] 
if and only if $\varrho$ is said to be {\em meaningful} for the teleportation, 
since it was shown that 
the classical teleportation can have a fidelity of at most $\bb{F}_{\tn{T}}(\Lambda_\varrho)=\frac{2}{d+1}$ 
[or $F(\Lambda_\varrho)=\frac{1}{d}$]~\cite{BHHH00}.

%%%%%
\subsection{Port-based teleportation} \label{PBT}

%%%%%%%%%%
\begin{figure}[t!]
%\center
\includegraphics[width=\linewidth]{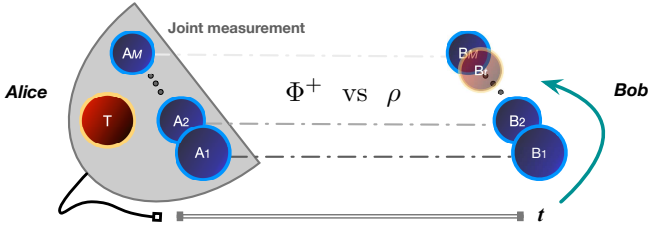}
\caption{A schematic diagram for the PBT. 
The standard PBT protocol is about the PBT on copies of the maximally entangled state $\Phi^+$, 
but we consider the PBT protocol on copies of any arbitrary bipartite state $\rho$. 
As in the original PBT protocol, 
Alice's joint measurement outcome $t$ is transmitted to Bob through the classical channel (double line),  
and Bob only selects the port $t$ to teleport the state $\psi_T$.}
\label{Fig1}
\end{figure}
%%%%%%%%%%

Let $\psi_T:=\proj{\psi}{\psi}_T$ be a $d$-dimensional unknown pure state, 
which Alice wants to teleport to Bob. 
Let $\mathbf{A}=\{A_1,A_2,\ldots,A_M\}$ and $\mathbf{B}=\{B_1,B_2,\ldots,B_M\}$ 
denote the Alice's and Bob's total systems for $M$ ports, respectively. 
To begin with, we assume that the sender Alice and the receiver Bob share a $2M$-qudit pure state of the form 
\begin{align}
\varphi_{\mathbf{AB}}
&=(O_{\mathbf{A}}\otimes \1_{\mathbf{B}})\Phi_{A_1B_1}^+\otimes\cdots\otimes\Phi_{A_MB_M}^+(O_{\mathbf{A}}^\dag\otimes \1_{\mathbf{B}}),
\label{eq:varphi_AB}
\end{align}
where Alice's operations $O_{\mathbf{A}}$ satisfying $\T[O_{\mathbf{A}}O_{\mathbf{A}}^\dag]=d^M$. 
Let $\bar{A}_t:=\mathbf{A}\setminus \{A_t\}$ and $\bar{B}_t:=\mathbf{B}\setminus \{B_t\}$; 
then we can represent the PBT channel $\tilde{\Lambda}_{\varphi}$ over the state $\varphi_{\mathbf{AB}}$ as
\begin{align*}
\tilde{\Lambda}_{\varphi}(\psi_T)
&=\sum_{t=1}^M\left[\T_{\mathbf{A}\bar{B}_tT}\sqrt{\Pi_t^{(\mathbf{A}T)}}(\varphi_{\mathbf{AB}}\otimes\psi_T)
\sqrt{\Pi_t^{(\mathbf{A}T)}}^\dag\right]_{B_t\to B} 
\nonumber\\
&=\sum_{t=1}^M \T_{\mathbf{A}T}\Pi_t^{(\mathbf{A}T)}\left[(O_{\mathbf{A}}\otimes\1_B)\sigma_{\mathbf{A}B}^{(t)}(O_{\mathbf{A}}^\dag\otimes\1_B)\otimes\psi_T\right],
\end{align*}
where the positive operator-valued measurement elements are described by 
$\{\Pi_{i}^{(\mathbf{A}T)}\}_{i=1}^M$ such that $\sum_{i=1}^M\Pi_i^{(\mathbf{A}T)}=\1_{\mathbf{A}T}$, 
and 
\begin{align*}
\sigma_{\mathbf{A}B}^{(t)}&=\left[\T_{\bar{B}_t}\left(\Phi_{A_1B_1}^+\otimes\cdots\otimes\Phi_{A_MB_M}^+\right)\right]_{B_t\to B} \\
&=\frac{1}{d^{M-1}}\Phi_{A_tB}^+\otimes\1_{\bar{A}_t}.
\end{align*}

By exploiting Eq.~(\ref{eq:entfid}), we can obtain the entanglement fidelity for the channel $\tilde{\Lambda}_{\varphi}$~\cite{IH08,IH09} as
\begin{align*}
F(\tilde{\Lambda}_{\varphi})
&=\T\Phi_{BD}^+\left[(\tilde{\Lambda}_{\varphi}\otimes\1_D)\Phi_{TD}^+\right] \\
&=\frac{1}{d^2}\sum_{t=1}^M\T\Pi_t^{(\mathbf{A}B)}\left[(O_{\mathbf{A}}\otimes\1_B)\sigma_{\mathbf{A}B}^{(t)}(O_{\mathbf{A}}^\dag\otimes\1_B)\right].
\end{align*}

It was shown that if the PBT protocol is the deterministic standard one, 
that is, $O$ is the identity operator and $\{\Pi_{i}\}_{i=1}^M$ is the pretty good measurement,
then for sufficiently large $M\gg0$ and any $\varepsilon>0$, 
the entanglement fidelity $F(M)$ for the standard PBT protocol is given by~\cite{IH08,IH09,CLM+18}
\begin{equation} \label{eq:entfid_M}
F(M)=1-\frac{d^2-1}{4M}+\cl{O}(M^{-\frac{3}{2}+\varepsilon}).
\end{equation}
As a corollary of the result, the teleportation fidelity $\bb{F}_{\tn{T}}(M)$ of the protocol is given by
\begin{equation} \label{eq:telfid_M}
\bb{F}_{\tn{T}}(M)=1-\frac{d(d-1)}{4M}+\cl{O}(M^{-\frac{3}{2}+\varepsilon}),
\end{equation}
since [see Eq.~(\ref{eq:rel})]
\begin{equation*}
\bb{F}_{\tn{T}}(M)=\frac{dF(M)+1}{d+1}.
\end{equation*}

%%%%%%%
\section{Our generalized Port-based Teleportation} \label{main}
In this section, we take into account copies of an arbitrary mixed state for a generalized PBT
instead of copies of a maximally entangled state in the original PBT (see Fig.~\ref{Fig1}). 
This idea was first introduced in Refs.~\cite{PBP19,BPLP20}, but the entanglement fidelity and 
the teleportation fidelity for the PBT on those states are precisely investigated in this paper.

First, we analyze the entanglement fidelity of the PBT 
for copies of a depolarized state.
Note that any mixed state can always be transformed to a depolarized state with a parameter $p\in[0,1]$ 
by local operation and classical communication (LOCC)~\cite{DCLB00}. 
More precisely, 
\begin{equation}
\rho_{AB}\xlongrightarrow[\tn{LOCC}]{}\rho_{p}^{(AB)}=p\Phi_{AB}^++\frac{1-p}{d^2}\1_{AB},
\end{equation}
where $\Phi_{AB}^+$ is the maximally entangled state and $p\in[0,1]$. 
As in the scenario of the PBT, let 
\begin{equation*}
\varphi_p^{(\mathbf{AB})}:=(O_{\mathbf{A}}\otimes\1_{\mathbf{B}})\rho_{p}^{\otimes M}(O_{\mathbf{A}}^\dag\otimes\1_{\mathbf{B}})
\end{equation*}
be the $M$-port initial setting for the depolarized state
\begin{equation}
\rho_p=\left(\cl{D}_p\otimes \cl{I}_B\right)\left(\Phi_{AB}^+\right),
\label{eq:depolarized_state}
\end{equation} 
where $\cl{D}_p$ is the depolarizing channel with noise rate $1-p$, that is, 
\begin{equation}
\cl{D}_p(\rho)=(1-p)\frac{\1}{d}+p\rho.
\label{eq:depolarizing_channel}
\end{equation}
Then the depolarized teleportation channel $\tilde{\Lambda}_{\varphi_p}$ becomes
\begin{align} \label{eq:deptelfid}
\tilde{\Lambda}_{\varphi_p}(\psi_T)
&=\sum_{t=1}^M\left[\T_{\mathbf{A}\bar{B}_tT}
\sqrt{\Pi_t^{(\mathbf{A}T)}}(\varphi_p^{(\mathbf{A}\mathbf{B})}\otimes\psi_T)\sqrt{\Pi_t^{(\mathbf{A}T)}}^\dag\right]_{B_t\to B} 
\nonumber\\
&=\sum_{t=1}^M\T_{\mathbf{A}T}\Pi_t^{(\mathbf{A}T)}\left[(O_{\mathbf{A}}\otimes\1_{\mathbf{B}})
\sigma_{p}^{(t;\mathbf{A}B)}(O_{\mathbf{A}}^\dag\otimes\1_{\mathbf{B}})\otimes\psi_T\right],
\end{align}
where $\sigma_{p}^{(t;\mathbf{A}B)}$ in Eq.~(\ref{eq:deptelfid}) can be represented as follows.

\begin{lemma}\label{lem:sigmap} For any $p\in[0,1]$,
\begin{align}
\sigma_{p}^{(t;\mathbf{A}B)}=\sum_{r=0}^M&p^{M-r}(1-p)^r\Big[\begin{pmatrix}M-1 \\ r\end{pmatrix}\frac{1}{d^{M-1}}\Phi_{A_tB_t}^+\otimes\1_{\bar{A}_t} \nonumber\\
&+\begin{pmatrix}M-1 \\ r-1\end{pmatrix}\frac{1}{d^{M+1}}\1_{\mathbf{A}}\otimes\1_{\bar{B}_t}\Big]_{B_t\to B}.
\label{eq:sigmap}
\end{align}
\end{lemma}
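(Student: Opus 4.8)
The plan is to compute $\sigma_p^{(t;\mathbf{A}B)}=\left[\T_{\bar{B}_t}\left(\rho_p^{\otimes M}\right)\right]_{B_t\to B}$ by expanding the tensor power directly. Writing each factor as $\rho_p=p\Phi_{A_jB_j}^++\frac{1-p}{d^2}\1_{A_jB_j}$ and multiplying out, $\rho_p^{\otimes M}$ becomes a sum over the $2^M$ ways of assigning to each copy $j\in\{1,\dots,M\}$ either the ``entangled'' choice $p\Phi_{A_jB_j}^+$ or the ``identity'' choice $\frac{1-p}{d^2}\1_{A_jB_j}$. By linearity of the partial trace I would then push $\T_{\bar{B}_t}$ through this sum term by term.

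The key single-copy inputs are the following. For every spectator index $j\neq t$ the system $B_j$ is traced out, and both choices collapse to a multiple of $\1_{A_j}$: since $\T_{B_j}\Phi_{A_jB_j}^+=\frac1d\1_{A_j}$ and $\T_{B_j}\1_{A_jB_j}=d\,\1_{A_j}$, the entangled choice contributes $\frac{p}{d}\1_{A_j}$ while the identity choice contributes $\frac{1-p}{d}\1_{A_j}$. Thus each spectator supplies exactly one factor $1/d$ together with a weight $p$ or $1-p$, and the product of the $\1_{A_j}$ over $j\neq t$ assembles into $\1_{\bar{A}_t}$. For the port index $t$ the system $B_t$ is kept (and relabelled $B$), so its two choices survive as $p\Phi_{A_tB}^+$ and $\frac{1-p}{d^2}\1_{A_tB}$ respectively.

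Next I would organize the resulting sum by $r$, the total number of copies assigned the identity choice, which is precisely the power of $(1-p)$ carried by each term. It is cleanest to split according to the role of the distinguished port $t$. If copy $t$ makes the entangled choice, then $r$ of the remaining $M-1$ spectators are identities, there are $\binom{M-1}{r}$ such assignments, the accumulated prefactor is $p^{M-r}(1-p)^r/d^{M-1}$, and the operator is $\Phi_{A_tB}^+\otimes\1_{\bar{A}_t}$. If instead copy $t$ makes the identity choice, then $r-1$ of the $M-1$ spectators are identities, giving $\binom{M-1}{r-1}$ assignments, an extra factor $1/d^2$ from the retained copy (hence $1/d^{M+1}$ in total), and the operator $\1_{\mathbf{A}}\otimes\1_B$. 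Collecting the common weight $p^{M-r}(1-p)^r$ and summing $r$ from $0$ to $M$ (with $\binom{M-1}{M}=\binom{M-1}{-1}=0$ taking care of the endpoints) reproduces Eq.~(\ref{eq:sigmap}).

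The argument is bookkeeping rather than genuinely hard; the one place to be careful is tracking the powers of $1/d$ correctly, namely one factor from each of the $M-1$ traced spectators plus the extra $1/d^2$ that appears only in the retained identity term at port $t$. A good consistency check on the combinatorics is normalization: since $\T\left[\Phi_{A_tB}^+\otimes\1_{\bar{A}_t}\right]=d^{M-1}$ and $\T\left[\1_{\mathbf{A}}\otimes\1_B\right]=d^{M+1}$, the two operator traces collapse the bracket to $\binom{M-1}{r}+\binom{M-1}{r-1}=\binom{M}{r}$, and then $\sum_{r=0}^M\binom{M}{r}p^{M-r}(1-p)^r=1$ confirms $\T\,\sigma_p^{(t;\mathbf{A}B)}=1$, as it must since $\rho_p^{\otimes M}$ is a normalized state.
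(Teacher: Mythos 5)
Your proof is correct and takes essentially the same approach the paper intends: the paper's own proof merely writes $\sigma_{p}^{(t;\mathbf{A}B)}=\bigl[\T_{\bar{B}_t}(\rho_p^{\otimes M})\bigr]_{B_t\to B}$ and declares the expansion ``tedious but straightforward,'' and your binomial expansion with the spectator/port bookkeeping is precisely that computation carried out in full. The normalization check at the end is a nice addition the paper omits, and your reading of the second bracket term as $\1_{\mathbf{A}}\otimes\1_{B}$ (rather than the paper's apparent typo $\1_{\bar{B}_t}$) is consistent with how the lemma is used in the subsequent theorem.
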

\begin{proof}
Since 
\begin{equation}
\sigma_{p}^{(t;\mathbf{A}B)}=\left[\T_{\bar{B}_t}\left(\rho_p^{(A_1B_1)}\otimes\cdots\otimes\rho_p^{(A_MB_M)}\right)\right]_{B_t\to B},
\label{eq:lem_proof}
\end{equation}
it is tedious but straightforward to obtain the equality in Eq.~(\ref{eq:sigmap}). 
\end{proof}

\begin{theorem}%[Depolarized entanglement fidelity] 
\label{thm:depentfid}
\begin{equation}
F(\tilde{\Lambda}_{\varphi_p})=pF(\tilde{\Lambda}_{\varphi})+\frac{1}{d^2}(1-p),
\end{equation}
where $\varphi$ is the state in Eq.~(\ref{eq:varphi_AB}), that is, $\varphi=\varphi_1$.
\end{theorem}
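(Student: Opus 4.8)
The plan is to exploit the linearity of the entanglement fidelity in the port operators $\sigma_p^{(t;\mathbf{A}B)}$. Repeating verbatim the computation that produced $F(\tilde{\Lambda}_{\varphi})$, but starting from the depolarized channel in Eq.~(\ref{eq:deptelfid}), gives
\begin{equation*}
F(\tilde{\Lambda}_{\varphi_p})=\frac{1}{d^2}\sum_{t=1}^M\T\Pi_t^{(\mathbf{A}B)}\left[(O_{\mathbf{A}}\otimes\1_B)\sigma_p^{(t;\mathbf{A}B)}(O_{\mathbf{A}}^\dag\otimes\1_B)\right],
\end{equation*}
which is linear in $\sigma_p^{(t;\mathbf{A}B)}$. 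Hence it suffices to write $\sigma_p^{(t;\mathbf{A}B)}$ as an affine combination of the $p=1$ operator $\sigma_{\mathbf{A}B}^{(t)}$ and a maximally mixed operator, and then push each piece through this single formula.

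First I would collapse the two binomial sums appearing in Lemma~\ref{lem:sigmap}. By the binomial theorem the coefficient of the entangled term is $\sum_{r=0}^{M}\binom{M-1}{r}p^{M-r}(1-p)^r=p$, and (after the index shift $r\mapsto r-1$) the coefficient of the identity term is $\sum_{r=0}^{M}\binom{M-1}{r-1}p^{M-r}(1-p)^r=1-p$. Since $\frac{1}{d^{M-1}}\Phi_{A_tB}^+\otimes\1_{\bar{A}_t}=\sigma_{\mathbf{A}B}^{(t)}$ and the identity term is, after the $B_t\to B$ relabeling, the maximally mixed state $\frac{1}{d^{M+1}}\1_{\mathbf{A}B}$ on the $(M+1)$-qudit system $\mathbf{A}B$, this produces the decomposition
\begin{equation*}
\sigma_p^{(t;\mathbf{A}B)}=p\,\sigma_{\mathbf{A}B}^{(t)}+(1-p)\frac{1}{d^{M+1}}\1_{\mathbf{A}B}.
\end{equation*}

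Substituting this decomposition and using linearity splits $F(\tilde{\Lambda}_{\varphi_p})$ into two contributions. The $\sigma_{\mathbf{A}B}^{(t)}$ part reproduces the formula for $F(\tilde{\Lambda}_{\varphi})$ exactly and therefore contributes $p\,F(\tilde{\Lambda}_{\varphi})$. For the identity part I would use $(O_{\mathbf{A}}\otimes\1_B)\1_{\mathbf{A}B}(O_{\mathbf{A}}^\dag\otimes\1_B)=(O_{\mathbf{A}}O_{\mathbf{A}}^\dag)\otimes\1_B$, the POVM completeness $\sum_{t=1}^M\Pi_t^{(\mathbf{A}B)}=\1_{\mathbf{A}B}$, and the normalization $\T[O_{\mathbf{A}}O_{\mathbf{A}}^\dag]=d^M$ from Eq.~(\ref{eq:varphi_AB}); together these give $\frac{1}{d^2}\cdot\frac{1}{d^{M+1}}\T[(O_{\mathbf{A}}O_{\mathbf{A}}^\dag)\otimes\1_B]=\frac{1}{d^2}$, so the identity part contributes $\frac{1-p}{d^2}$. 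Adding the two contributions yields the claimed identity $F(\tilde{\Lambda}_{\varphi_p})=p\,F(\tilde{\Lambda}_{\varphi})+\frac{1}{d^2}(1-p)$.

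The only genuine obstacle is establishing the affine decomposition of $\sigma_p^{(t;\mathbf{A}B)}$ and handling the normalization of its maximally mixed part correctly; everything downstream is pure linearity and POVM completeness. As a sanity check independent of the binomial bookkeeping, I would note that $\rho_p$ has maximally mixed marginals, so tracing out the Bob qudit of every port $s\neq t$ returns $\1/d$ on each remaining $A_s$ regardless of $p$, leaving only the single factor $\rho_p^{(A_tB)}=p\Phi_{A_tB}^++\frac{1-p}{d^2}\1_{A_tB}$ to carry the $p$-dependence; this reproduces exactly the same affine form and confirms the coefficients $p$ and $1-p$.
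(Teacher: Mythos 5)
Your proposal is correct and follows essentially the same route as the paper: plug Lemma~\ref{lem:sigmap} into the linear fidelity formula, evaluate the two binomial sums to $p$ and $1-p$, and use $\T[O_{\mathbf{A}}O_{\mathbf{A}}^\dag]=d^M$ together with POVM completeness to show the maximally mixed part contributes $\frac{1-p}{d^2}$. The only cosmetic difference is that you collapse the binomial sums before substituting (and you spell out the justification of the identity-term trace that the paper merely asserts), which changes nothing of substance.
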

\begin{proof}
By using Lemma~\ref{lem:sigmap}, it can be shown that
\begin{align}
F(\tilde{\Lambda}_{\varphi_p})
=&\frac{1}{d^2}\sum_{t=1}^M\T\Pi_t^{(\mathbf{A}B)}\left[(O_{\mathbf{A}}\otimes\1_B)
\sigma_{p}^{(t;\mathbf{A}B)}(O_{\mathbf{A}}^\dag\otimes\1_B)\right] \nonumber\\
=&F(\tilde{\Lambda}_{\varphi})\cdot\sum_{r=0}^Mp^{M-r}(1-p)^r\begin{pmatrix}M-1 \\ r \end{pmatrix}\nonumber\\
&+\frac{1}{d^2}\sum_{r=0}^{M}p^{M-r}(1-p)^r\begin{pmatrix}M-1 \\ r-1 \end{pmatrix} \nonumber\\
=&F(\tilde{\Lambda}_{\varphi})\cdot\sum_{r=0}^{M-1}pp^{M-1-r}(1-p)^r\begin{pmatrix}M-1 \\ r \end{pmatrix}\nonumber \\
&+\frac{1}{d^2}\sum_{r=0}^{M-1}p^{M-1-r}(1-p)^{r}(1-p)\begin{pmatrix}M-1 \\ r \end{pmatrix} \nonumber\\
=&pF(\tilde{\Lambda}_{\varphi})+\frac{1}{d^2}(1-p), 
\end{align}
since
\begin{equation}
\sum_{t=1}^M\T\Pi_t^{(\mathbf{A}B)}\left[(O_{\mathbf{A}}\otimes\1_{B})
\tfrac{1}{d^{M+1}}\1_{\mathbf{A}}\otimes\1_{B}(O_{\mathbf{A}}^\dag\otimes\1_{B})\right]=1.
\end{equation}
\end{proof}
By Theorem~\ref{thm:depentfid}, we clearly obtain the following corollary.
\begin{corollary}%[Depolarized teleportation fidelity]
\label{cor:depentfid} 
\begin{align}
\bb{F}_{\tn{T}}(\tilde{\Lambda}_{\varphi_p})=&\frac{d^2pF(\tilde{\Lambda}_{\varphi})+d+1-p}{d(d+1)}\\
=& p\bb{F}_{\tn{T}}(\tilde{\Lambda}_{\varphi})+(1-p)(1+\tfrac{1}{d}).
\end{align}
\end{corollary}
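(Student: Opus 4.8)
The plan is to obtain the corollary directly from the universal relation~(\ref{eq:rel}) together with Theorem~\ref{thm:depentfid}, exploiting the fact that $\bb{F}_{\tn{T}}$ is an affine function of the entanglement fidelity $F$. First I would specialize~(\ref{eq:rel}) to the depolarized channel, writing $\bb{F}_{\tn{T}}(\tilde{\Lambda}_{\varphi_p})=\frac{dF(\tilde{\Lambda}_{\varphi_p})+1}{d+1}$, and then insert the value $F(\tilde{\Lambda}_{\varphi_p})=pF(\tilde{\Lambda}_{\varphi})+\frac{1}{d^2}(1-p)$ furnished by the theorem. Everything after this substitution is elementary algebra, so I expect no genuine obstacle; the content is simply the transport of the theorem's linearity in $p$ through the affine map~(\ref{eq:rel}).

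For the first equality I would clear the nested fraction by multiplying numerator and denominator by $d$. The term $d\cdot\frac{1}{d^2}(1-p)$ then collapses to $\frac{1-p}{d}$, and combining it with the $+1$ from~(\ref{eq:rel}) over the common denominator $d(d+1)$ produces the single fraction whose numerator is $d^2pF(\tilde{\Lambda}_{\varphi})+d+1-p$. The only point demanding care is the bookkeeping of the $1/d$ versus $1/d^2$ normalizations inherited from the maximally mixed component $\frac{1-p}{d^2}\1_{AB}$ of $\rho_p$.

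For the second equality I would regroup the first form as a convex combination in $p$. Since $F(\tilde{\Lambda}_{\varphi_p})$ is affine in $p$ and~(\ref{eq:rel}) is affine in $F$, the quantity $\bb{F}_{\tn{T}}(\tilde{\Lambda}_{\varphi_p})$ is itself an affine interpolation between its two endpoints, so I would simply evaluate these. At $p=1$ one recovers $\bb{F}_{\tn{T}}(\tilde{\Lambda}_{\varphi})$, and at $p=0$---where the resource is maximally mixed and $F=1/d^2$---relation~(\ref{eq:rel}) returns the teleportation fidelity $1/d$. Collecting the $p$-linear terms of the first form and matching them to the pattern $p\,\bb{F}_{\tn{T}}(\tilde{\Lambda}_{\varphi})+(1-p)\,c$ then fixes the constant $c$ as this endpoint value. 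This regrouping is the only step where a stray additive constant could slip into the $(1-p)$-coefficient, so I would double-check the result against the $p=0$ endpoint $1/d$ as a consistency test.
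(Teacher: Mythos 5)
Your approach is exactly the paper's: the corollary is obtained by substituting the expression for $F(\tilde{\Lambda}_{\varphi_p})$ from Theorem~\ref{thm:depentfid} into the universal relation~(\ref{eq:rel}), and your algebra for the first displayed equality is correct. Note, however, that your (correct) derivation of the second equality does not reproduce the statement as printed: carrying out your regrouping, or simply applying your own $p=0$ consistency check, gives
\begin{equation*}
\bb{F}_{\tn{T}}(\tilde{\Lambda}_{\varphi_p})=p\,\bb{F}_{\tn{T}}(\tilde{\Lambda}_{\varphi})+\frac{1-p}{d},
\end{equation*}
whereas the corollary's second line reads $p\,\bb{F}_{\tn{T}}(\tilde{\Lambda}_{\varphi})+(1-p)\bigl(1+\tfrac{1}{d}\bigr)$, which would exceed $1$ at $p=0$ and is inconsistent with the first line. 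The discrepancy is a typo in the paper, not an error in your argument; your endpoint check is precisely the right safeguard and correctly fixes the $(1-p)$ coefficient as $\tfrac{1}{d}$.
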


We now note that the depolarized fully entangled fraction of $\rho_p$ can be obtained by
\begin{align}
f(\rho_p)&=\max_{\forall\ket{\Phi^+}}\bra{\Phi^+}\rho_p\ket{\Phi^+} \\
&=p+\frac{1-p}{d^2}=\frac{1+(d^2-1)p}{d^2},
\end{align}
and we thus have an identity $p=\frac{d^2f(\rho_p)-1}{d^2-1}$.

For $p\in[0,1]$, 
let $F^{(p)}$ and $\bb{F}^{(p)}_{\tn{T}}$ be the entanglement fidelity and the teleportation fidelity 
for the standard PBT protocol over the state $\rho_p^{\otimes M}$, respectively.   
Then by Theorem~\ref{thm:depentfid} and Corollary~\ref{cor:depentfid}, 
the $F^{(p)}$ and $\bb{F}^{(p)}_{\tn{T}}$ can be described as follows. 
%%%
\begin{theorem} \label{prop:genfids}
For sufficiently large $M\gg0$ and for any $\varepsilon>0$, we obtain that
\begin{align}
F^{(p)}(M)&=f(\rho_p)-\frac{p(d^2-1)}{4M}+\cl{O}(M^{-\frac{3}{2}+\varepsilon}), \\
\bb{F}_{\tn{T}}^{(p)}(M)&=\bb{F}_{\tn{T}}({\Lambda}_{\rho_p})-\frac{pd(d-1)}{4M}+\cl{O}(M^{-\frac{3}{2}+\varepsilon}).
\end{align}
\end{theorem}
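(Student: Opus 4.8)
The plan is to obtain both expansions by substituting the known large-$M$ behaviour of the maximally entangled resource into the exact affine identity of Theorem~\ref{thm:depentfid}, and then passing to the teleportation fidelity through the universal relation~(\ref{eq:rel}). The key observation is that $F^{(p)}(M)$ and $\bb{F}_{\tn{T}}^{(p)}(M)$ refer to the \emph{standard} PBT protocol over $\rho_p^{\otimes M}$, i.e.\ the one with $O_{\mathbf{A}}=\1$ and $\{\Pi_t\}$ the pretty good measurement. For this choice the $p=1$ resource $\varphi=\varphi_1$ is exactly the maximally entangled state of Eq.~(\ref{eq:varphi_AB}), so $F(\tilde{\Lambda}_{\varphi})$ equals the tabulated $F(M)$ of Eq.~(\ref{eq:entfid_M}).

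For the first line I would apply Theorem~\ref{thm:depentfid} in the form $F^{(p)}(M)=pF(M)+\tfrac{1}{d^2}(1-p)$ and insert $F(M)=1-\tfrac{d^2-1}{4M}+\cl{O}(M^{-3/2+\varepsilon})$. Collecting terms gives $F^{(p)}(M)=\bigl(p+\tfrac{1-p}{d^2}\bigr)-\tfrac{p(d^2-1)}{4M}+\cl{O}(M^{-3/2+\varepsilon})$, and the constant is precisely the fully entangled fraction $f(\rho_p)=p+\tfrac{1-p}{d^2}$ computed above. The remainder survives because it is only rescaled by $p\le 1$, so it remains $\cl{O}(M^{-3/2+\varepsilon})$; this already yields the claimed expansion for $F^{(p)}(M)$.

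For the teleportation fidelity I would avoid re-deriving the affine law and instead feed the first line straight into the universal relation $\bb{F}_{\tn{T}}^{(p)}(M)=\tfrac{dF^{(p)}(M)+1}{d+1}$ of Eq.~(\ref{eq:rel}), which is legitimate since the PBT map is itself a teleportation channel. This produces a constant term $\tfrac{d f(\rho_p)+1}{d+1}$ and a $1/M$ coefficient $-\tfrac{d(d^2-1)}{d+1}\cdot\tfrac{p}{4M}$. The one step that needs genuine care---the main obstacle, modest as it is---is recognizing the constant as the ordinary teleportation fidelity $\bb{F}_{\tn{T}}(\Lambda_{\rho_p})$: this follows from $F(\Lambda_{\rho_p})=f(\rho_p)$ (entanglement fidelity equals fully entangled fraction, as noted in Sec.~\ref{definitions}) together with the same relation~(\ref{eq:rel}). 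The coefficient then simplifies via $d^2-1=(d-1)(d+1)$ to $-\tfrac{pd(d-1)}{4M}$, giving the second line. Since Theorem~\ref{thm:depentfid} and relation~(\ref{eq:rel}) are exact, no further error estimate is required; alternatively one could read the teleportation line directly from Corollary~\ref{cor:depentfid}, which packages the same affine dependence on $\bb{F}_{\tn{T}}(\tilde{\Lambda}_{\varphi})$.
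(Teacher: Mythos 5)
Your proposal is correct and follows essentially the same route as the paper, which states this theorem as an immediate consequence of Theorem~\ref{thm:depentfid} and Corollary~\ref{cor:depentfid} without writing out the substitution; you simply fill in those steps, namely inserting $F(M)=1-\tfrac{d^2-1}{4M}+\cl{O}(M^{-3/2+\varepsilon})$ into the affine identity, recognizing the constant $p+\tfrac{1-p}{d^2}$ as $f(\rho_p)$, and passing through relation~(\ref{eq:rel}) with the simplification $\tfrac{d(d^2-1)}{d+1}=d(d-1)$. All the computations check out.
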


Finally, we investigate the \emph{generalized} teleportation fidelity and the entanglement fidelity 
for copies of an arbitrary mixed state shared between Alice and Bob. 

%%%
\begin{theorem}\label{thm:genfids}
For any bipartite mixed state $\rho$ over Alice and Bob,
i.e., mixed teleportation channel $\tilde{\Lambda}_{\rho^{\otimes M}}$,  
the entanglement fidelity $F^\rho$ and the teleportation fidelity $\bb{F}_{\tn{T}}^{\rho}$
for the standard PBT on $\rho^{\otimes M}$ are shown as
\begin{align}
F^{\rho}(M)
&=f(\rho)\left(1-\frac{d^2}{4M}\right)+\frac{1}{4M}+\cl{O}(M^{-\frac{3}{2}+\varepsilon}),\nonumber\\
\bb{F}_{\tn{T}}^{\rho}(M)
&=\bb{F}_{\tn{T}}({\Lambda}_{\rho})\left(1-\frac{d^2}{4M}\right)+\frac{d}{4M}+\cl{O}(M^{-\frac{3}{2}+\varepsilon}).
\label{eq:gentelfid_M}
\end{align}
\end{theorem}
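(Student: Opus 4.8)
The plan is to reduce the arbitrary mixed-state case to the depolarized case already settled in Theorem~\ref{prop:genfids}. The enabling observation, recorded just before Eq.~(\ref{eq:depolarized_state}), is that any bipartite mixed state $\rho$ can be mapped by LOCC (a depolarizing twirl) to the isotropic state $\rho_p$, and that this twirl preserves the fully entangled fraction. I would therefore fix the parameter $p$ so that $f(\rho_p)=f(\rho)$; by the identity obtained from $f(\rho_p)=\frac{1+(d^2-1)p}{d^2}$ this forces $p=\frac{d^2f(\rho)-1}{d^2-1}$, equivalently $p(d^2-1)=d^2f(\rho)-1$.

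Next I would argue that the standard PBT fidelities over $\rho^{\otimes M}$ coincide with those over $\rho_p^{\otimes M}$ for this choice of $p$. Since the depolarizing twirl acts copy by copy and is a local operation, it can be absorbed into the preprocessing of the protocol; because the entanglement fidelity in Eq.~(\ref{eq:entfid}) is evaluated on a maximally entangled input and the pretty-good measurement respects the bilateral twirl symmetry, the only feature of the resource that survives to the orders kept is its fully entangled fraction. Consequently $F^{\rho}(M)=F^{(p)}(M)$ and $\bb{F}_{\tn{T}}^{\rho}(M)=\bb{F}_{\tn{T}}^{(p)}(M)$.

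The remaining step is pure substitution. For the entanglement fidelity, Theorem~\ref{prop:genfids} gives $F^{(p)}(M)=f(\rho_p)-\frac{p(d^2-1)}{4M}+\cl{O}(M^{-\frac{3}{2}+\varepsilon})$, and inserting $f(\rho_p)=f(\rho)$ together with $p(d^2-1)=d^2f(\rho)-1$ yields $f(\rho)\left(1-\frac{d^2}{4M}\right)+\frac{1}{4M}+\cl{O}(M^{-\frac{3}{2}+\varepsilon})$, as claimed. For the teleportation fidelity I would use Eq.~(\ref{eq:rel}) together with $F(\Lambda_{\rho_p})=f(\rho_p)=f(\rho)$ to deduce $\bb{F}_{\tn{T}}(\Lambda_{\rho_p})=\bb{F}_{\tn{T}}(\Lambda_{\rho})$, and then verify the arithmetic identity $pd(d-1)=\frac{d(d^2f(\rho)-1)}{d+1}=d^2\bb{F}_{\tn{T}}(\Lambda_{\rho})-d$, which rearranges the expression in Theorem~\ref{prop:genfids} into the stated form.

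The main obstacle I anticipate is rigorously justifying the middle step, namely that the standard PBT performance over $\rho^{\otimes M}$ genuinely depends on $\rho$ only through $f(\rho)$ to the orders retained. One inequality is immediate, since twirling first is an allowed LOCC preprocessing, so the general fidelity is at least the depolarized one; the reverse inequality, that the standard protocol cannot exploit any finer structure of $\rho$ beyond $f(\rho)$, is the delicate part, and it is precisely where the twirl-invariance of the pretty-good measurement and of the entanglement fidelity must be invoked with care.
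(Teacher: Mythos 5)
Your proposal follows essentially the same route as the paper: fix $p=\frac{d^2f(\rho)-1}{d^2-1}$ by matching the fully entangled fraction under the LOCC twirl, substitute into Theorem~\ref{prop:genfids}, and convert to the teleportation fidelity via Eq.~(\ref{eq:rel}); your arithmetic identities $p(d^2-1)=d^2f(\rho)-1$ and $pd(d-1)=d^2\bb{F}_{\tn{T}}(\Lambda_\rho)-d$ reproduce the paper's computation exactly. The ``delicate middle step'' you flag --- that the standard PBT performance on $\rho^{\otimes M}$ depends on $\rho$ only through $f(\rho)$ --- is not resolved in the paper either, which simply asserts ``without loss of generality, we may assume that $\rho$ can be transformed to $\rho_p$'' without further justification, so your attempt is at least as complete as the published proof.
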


\begin{proof}
Recall that $p=\frac{d^2f(\rho_p)-1}{d^2-1}=\frac{d^2f(\rho)-1}{d^2-1}$, 
since without loss of generality, we may assume that $\rho$ can be transformed to $\rho_p$ satisfying $f(\rho)= f(\rho_p)$ under LOCC. 
Thus from Theorem~\ref{prop:genfids}, we have
\begin{align*}
F^{\rho}(M)&=f(\rho)-\tfrac{d^2f(\rho)-1}{d^2-1}\cdot\tfrac{d^2-1}{4M}+\cl{O}(M^{-\frac{3}{2}+\varepsilon}) \\
&=f(\rho)\left(1-\tfrac{d^2}{4M}\right)+\tfrac{1}{4M}+\cl{O}(M^{-\frac{3}{2}+\varepsilon})
\end{align*}
and 
\begin{align*}
\bb{F}_{\tn{T}}^{\rho}(M)&=\frac{dF^{\rho}(M)+1}{d+1} \\
&=\bb{F}_{\tn{T}}({\Lambda}_{\rho})\left(1-\tfrac{d^2}{4M}\right)+\frac{d}{4M}+\cl{O}(M^{-\frac{3}{2}+\varepsilon}).
\end{align*}
This completes the proof.
\end{proof}

%%%%%%%
\section{Controlled Port-based teleportation and its control power} \label{CPBT}
In this section, we propose an extended concept of the PBT, 
called the CPBT, 
which combines the well-known two ideas about CT and PBT. 
Under this consideration, Charlie can be endowed with assistance ability as a control power. 

Let $\varphi_{ABC}:=\proj{\varphi}{\varphi}_{ABC}$ be a three-qudit pure state. 
For any $\alpha\in\{A,B,C\}$, 
let the maximal CPBT fidelity $\bb{F}_{\tn{CT}}^{\varphi_{ABC},\alpha}$ be the maximal teleportation fidelity   
(depending on the port $M$) of the resulting two-qudit state $\rho_{\beta\gamma}^{[i]}$ in the subsystem $\beta\gamma$,
where $\{\alpha,\beta,\gamma\}=\{A,B,C\}$, 
after the $d$-outcome measurement $\{\cl{M}_i\}$ performed on the state $\rho_{\alpha}=\T_{\beta\gamma}\varphi_{ABC}$.
In other words, it mathematically becomes  
\begin{align} \label{eq:maxtelfid}
\bb{F}_{\tn{CT}}^{\varphi_{ABC},\alpha}(M)
=&\max_{\{\cl{M}_i\}}\sum_{i=0}^{d-1}\T(\cl{M}_i\rho_\alpha)
\bb{F}_{\tn{T}}^{\rho_{\beta\gamma}^{[i]}}(M) \\
=&\max_{\{\cl{M}_i\}}\sum_{i=0}^{d-1}\T(\cl{M}_i\rho_\alpha)
\bb{F}_{\tn{T}}({\Lambda}_{\rho_{\beta\gamma}^{[i]}})\left(1-\tfrac{d^2}{4M}\right)
\nonumber\\
&+\tfrac{d}{4M}+\cl{O}(M^{-\frac{3}{2}+\varepsilon}) \nonumber\\
=&\bb{F}_{\tn{CT}}^\alpha(\varphi_{ABC})\left(1-\frac{d^2}{4M}\right) 
+\frac{d}{4M}+\cl{O}(M^{-\frac{3}{2}+\varepsilon}),
\end{align}
where the maximum is taken over all measurements $\{\cl{M}_i\}$ on the subsystem $\alpha$, 
and 
$\bb{F}_{\tn{CT}}^\alpha$ is called the maximal CT fidelity,
which is defined as 
the maximal teleportation fidelity of the resulting two-qudit state in the subsystem $\beta\gamma$
after the measurement on the system $\alpha$ as in Refs.~\cite{LJK05,LJK07,JKL16},
that is,  
\begin{equation}
\bb{F}_{\tn{CT}}^\alpha(\varphi_{ABC})
=\max_{\{\cl{M}_i\}}\sum_{i=0}^{d-1}\T(\cl{M}_i\rho_\alpha)
\bb{F}_{\tn{T}}({\Lambda}_{\rho_{\beta\gamma}^{[i]}}).
\label{eq:F_CT}
\end{equation}
We notice that since $\bb{F}_{\tn{CT}}^{\varphi_{ABC},\alpha}(M)\ge\bb{F}_{\tn{T}}^{\rho_{\beta\gamma}}(M)$, 
one can naturally define a concept of the control power for the CPBT on a given tripartite quantum state
as the difference between the maximal CPBT fidelity and 
the teleportation fidelity without control.

For each $\alpha\in\{A,B,C\}$ and a sufficiently large port number $M>0$, 
let the control power $\bb{P}_{\tn{CT}}^{\varphi_{ABC},\alpha}(M)$ of the state $\varphi_{ABC}$ be defined as
\begin{equation}
\bb{P}_{\tn{CT}}^{\varphi_{ABC},\alpha}(M)=\bb{F}_{\tn{CT}}^{\varphi_{ABC},\alpha}(M)-\bb{F}_{\tn{T}}^{\rho_{\beta\gamma}}(M),
\label{eq:CP}
\end{equation}
and the minimal control power $\bb{P}_{\tn{CT}}(M)$ of the state $\varphi_{ABC}$ be defined as
\begin{equation}
\bb{P}_{\tn{CT}}(M)=\min_{\forall \alpha\in\{A,B,C\}}\bb{P}_{\tn{CT}}^{\varphi_{ABC},\alpha}(M).
\label{eq:MCP}
\end{equation}
Then the following corollary can clearly be obtained.

%%%
\begin{corollary}\label{prop:pbcp}
For each $\alpha\in\{A,B,C\}$, the control power of a state $\varphi_{ABC}$ can simply be expressed as 
\begin{align}
\bb{P}_{\tn{CT}}^{\varphi_{ABC},\alpha}(M)
=\bb{P}_{\tn{CT}}^\alpha(\varphi_{ABC})\left(1-\tfrac{d^2}{4M}\right)+\cl{O}(M^{-\frac{3}{2}+\varepsilon}),
\end{align}
where $\bb{P}_{\tn{CT}}^\alpha(\varphi_{ABC})$ is the control power for the CT proposed in Ref.~\cite{JKL16}, 
that is, 
\begin{equation*}
\bb{P}_{\tn{CT}}^\alpha(\varphi_{ABC})=\bb{F}_{\tn{CT}}^\alpha(\varphi_{ABC})-\bb{F}_{\tn{T}}({\Lambda}_{\rho_{\beta\gamma}}).
\end{equation*}
\end{corollary}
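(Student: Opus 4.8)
The plan is to obtain this directly from the definition of the control power by substituting the two large-$M$ expansions already established above, and then observing that the affine correction terms cancel. First I would recall the definition in Eq.~(\ref{eq:CP}),
\begin{equation*}
\bb{P}_{\tn{CT}}^{\varphi_{ABC},\alpha}(M)=\bb{F}_{\tn{CT}}^{\varphi_{ABC},\alpha}(M)-\bb{F}_{\tn{T}}^{\rho_{\beta\gamma}}(M),
\end{equation*}
so that the whole task reduces to subtracting two quantities whose asymptotics are known.

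Next I would insert the expansion of the maximal CPBT fidelity derived in Eq.~(\ref{eq:maxtelfid}),
\begin{equation*}
\bb{F}_{\tn{CT}}^{\varphi_{ABC},\alpha}(M)=\bb{F}_{\tn{CT}}^\alpha(\varphi_{ABC})\left(1-\tfrac{d^2}{4M}\right)+\tfrac{d}{4M}+\cl{O}(M^{-\frac{3}{2}+\varepsilon}),
\end{equation*}
together with the uncontrolled teleportation fidelity obtained by applying Theorem~\ref{thm:genfids} to $\rho=\rho_{\beta\gamma}$,
\begin{equation*}
\bb{F}_{\tn{T}}^{\rho_{\beta\gamma}}(M)=\bb{F}_{\tn{T}}({\Lambda}_{\rho_{\beta\gamma}})\left(1-\tfrac{d^2}{4M}\right)+\tfrac{d}{4M}+\cl{O}(M^{-\frac{3}{2}+\varepsilon}).
\end{equation*}
The key point is that both expressions carry the same additive offset $\tfrac{d}{4M}$ and the same multiplicative factor $\left(1-\tfrac{d^2}{4M}\right)$; upon subtraction the offsets cancel and the common factor pulls out, leaving $\left[\bb{F}_{\tn{CT}}^\alpha(\varphi_{ABC})-\bb{F}_{\tn{T}}({\Lambda}_{\rho_{\beta\gamma}})\right]\left(1-\tfrac{d^2}{4M}\right)$ modulo the error terms. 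By the definition of the CT control power from Ref.~\cite{JKL16} this bracket is exactly $\bb{P}_{\tn{CT}}^\alpha(\varphi_{ABC})$, and since the difference of two $\cl{O}(M^{-\frac{3}{2}+\varepsilon})$ remainders is again $\cl{O}(M^{-\frac{3}{2}+\varepsilon})$, the claimed identity follows.

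I do not expect any real obstacle: the corollary is an immediate consequence of the shared affine structure of the two fidelity expansions, which is precisely why the $\tfrac{d}{4M}$ offsets disappear and only the control-power difference survives at leading order. The one place deserving a line of care is the interchange of the maximization over measurements $\{\cl{M}_i\}$ with the factor $\left(1-\tfrac{d^2}{4M}\right)$ that is implicit in Eq.~(\ref{eq:maxtelfid}); this is legitimate because $\sum_i\T(\cl{M}_i\rho_\alpha)=1$ renders the offset and the remainder measurement-independent, and the factor is a positive constant in $M$, so the maximizer of the controlled fidelity coincides asymptotically with the one defining $\bb{F}_{\tn{CT}}^\alpha(\varphi_{ABC})$.
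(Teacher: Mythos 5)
Your proposal is correct and follows exactly the route the paper intends: substitute the large-$M$ expansion of $\bb{F}_{\tn{CT}}^{\varphi_{ABC},\alpha}(M)$ from Eq.~(\ref{eq:maxtelfid}) and that of $\bb{F}_{\tn{T}}^{\rho_{\beta\gamma}}(M)$ from Theorem~\ref{thm:genfids} into the definition~(\ref{eq:CP}), and note that the common offset $\tfrac{d}{4M}$ cancels while the factor $\left(1-\tfrac{d^2}{4M}\right)$ pulls out. The paper offers no further detail, so your added remark about the measurement-independence of the offset is a harmless (and slightly more careful) elaboration of the same argument.
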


%%%%%
Hence by exploiting Corollary~\ref{prop:pbcp} and the results in Ref.~\cite{JKL16}
we can readily compute the control powers for three-qubit pure states, 
especially, the extended Greenberger-Horne-Zeilinger (GHZ) state, $\ket{\varphi_{\tn{GHZ}}}=a\ket{000}+b\ket{111}$, 
and the $W$-class state, $\ket{\varphi_W}=w_0\ket{000}+w_1\ket{100}+w_2\ket{101}+w_3\ket{110}$ 
with the coefficients $w_i\ge0$ such that $\sum_i w_i^2=1$~\cite{AA+00,ABLS01}.

Since it is shown~\cite{JKL16} that for any $\alpha\in\{A,B,C\}$, 
the control powers for the CT on the extended GHZ state and the $W$-class state are
\begin{align*}
&\bb{P}_{\tn{CT}}^\alpha(\varphi_{\tn{GHZ}})=\frac{2|a||b|}{3},\\
&\bb{P}_{\tn{CT}}^\alpha(\varphi_{W})=\frac{1}{6}(2w_\beta w_\gamma+1-\sqrt{W_\alpha}),
\end{align*}
where 
\begin{equation*}
W_\alpha=\max\left\{(w_1^2+(\mp w_\beta\pm w_\gamma+\alpha_l)^2)(w_1^2+(w_j+w_k\pm w_l)^2)\right\},
\end{equation*}
the control powers for the CPBT on the extended GHZ state and the $W$-class state become
\begin{equation}
\bb{P}_{\tn{CT}}(M)=\bb{P}_{\tn{CT}}^{\varphi_{\tn{GHZ}},\alpha}(M)
=\frac{2|a||b|}{3}\left(1-\frac{1}{M}\right)+\cl{O}(M^{-\frac{3}{2}+\varepsilon})
\end{equation}
and 
\begin{equation}
\bb{P}_{\tn{CT}}^{\varphi_W,\alpha}(M)
=\frac{1}{6}(2w_\beta w_\gamma+1-\sqrt{W_\alpha})
\left(1-\frac{1}{M}\right)+\cl{O}(M^{-\frac{3}{2}+\varepsilon}),
\end{equation}
respectively.

%%%%%%%
\section{Conclusions} \label{conclusion}
In this paper, 
we have introduced quantities for controlled teleportation capability,
as a variant of the teleportation capability of original quantum teleportation (or the PBT and the CT), 
and have analyzed its control powers in terms of the port number $M$ 
representing how faithfully the PBT on a given tripartite state can be performed. 
Here, we have made use of a generalization technique 
to employ the PBT channel on copies of an arbitrary bipartite mixed state 
instead of the standard one on copies of a pure bipartite maximally entangled state.
Furthermore, we have found explicit formulas and relations for the teleportation fidelity and the entanglement fidelity 
as well as the maximal teleportation fidelity, and thus we have derived the control powers for the CPBT on tripartite quantum states. 

There are still some intriguing open questions on the PBT itself and beyond. 
First, we can imagine another variants of the PBT protocols or their communication capabilities in quantum communication, 
for example, a controlled dense-coding~\cite{OKJJ17} and a remote state-preparation scheme~\cite{NK08}. 
Since the performance of the PBT has been studied in a probabilistic scenario as well as in a deterministic one, 
we can also investigate how the controlled version behaves in a probabilistic scenario. 
Those kind of researches may extend our knowledge of port-based quantum communications.

%%%%%
\section{Acknowledgments}
This research was supported by the Basic Science Research Program 
through the National Research Foundation of Korea funded by the Ministry of Science and ICT (Grant No. NRF-2019R1A2C1006337) and the Ministry of Science and ICT, Korea, under the Information Technology Research Center support program (Grant No. IITP-2019-2018-0-01402) 
supervised by the Institute for Information and Communications Technology Promotion.
K.J. acknowledges support from Basic Science Research Program through the National Research Foundation of Korea, a grant funded by the the Ministry of Education (NRF-2018R1D1A1B07047512) and the Ministry of Science and ICT (NRF-2017R1E1A1A03070510). J.K. was supported in part by KIAS Advanced Research Program CG014604. S.L. acknowledges support from Research Leave Program of Kyung Hee University in 2018.

%%%%%
%

\end{document}